\newcommand{\supp}{\mathsf{supp}}
\newtheorem{theorem}{Theorem}
\newtheorem{lemma}[theorem]{Lemma}
\theoremstyle{definition}
\newtheorem{defn}[theorem]{Definition}
\newcommand{\zo}{\{0,1\}}
\newcommand{\eps}{\epsilon}
\newcommand{\poly}{\mathsf{poly}}
\newcommand{\qpoly}{\mathsf{quasipoly}}
\newcommand{\wgt}{\mathsf{wgt}}
\begin{document}

\setlength{\pdfpageheight}{\paperheight}
\setlength{\pdfpagewidth}{\paperwidth}

\title{Derandomization and Group Testing} \author{ \authorblockN{Mahdi
    Cheraghchi\thanks{ Research supported by the ERC Advanced
      investigator grant 228021 of A.~Shokrollahi.}
    \authorblockA{School of Computer and Communication Sciences\\
      EPFL, Lausanne, Switzerland \\
      mahdi.cheraghchi@epfl.ch }}}

\maketitle

\begin{abstract}

  The rapid development of derandomization theory, which is a
  fundamental area in theoretical computer science, has recently led
  to many surprising applications outside its initial intention. We
  will review some recent such developments related to combinatorial
  group testing. In its most basic setting, the aim of group testing
  is to identify a set of ``positive'' individuals in a population of
  items by taking groups of items and asking whether there is a
  positive in each group.

  In particular, we will discuss explicit constructions of optimal or
  nearly-optimal group testing schemes using ``randomness-conducting''
  functions. Among such developments are constructions of
  error-correcting group testing schemes using randomness extractors
  and condensers, as well as threshold group testing schemes from
  lossless condensers.

\end{abstract}

\section{Introduction}

Combinatorial group testing is a classical problem that dates back to
several decades ago \cite{ref:Dor43} and has recently attracted
increased attention mainly due to its numerous applications in various
theoretical and practical areas. Intuitively, the problem can be
described as follows: Suppose that, in a population of $n$
individuals, it is suspected that up to $d$ of them (known as
\emph{defectives}) carry a certain disease that can be diagnosed by
testing blood samples. Typically, the parameters $d$ is considered to
be substantially smaller than $n$. An economical way of testing the
samples is to pool them in groups. For each pool, one can apply the
test on the combination.  A negative outcome would imply that none of
the samples participating in the pool are infected, whereas a positive
outcome means that at least one of the individuals corresponding to
the group is infected. The challenge is then, to design a pooling
strategy that minimizes the number of tests that have to be performed
in order to identify the exact set of defectives.

Over the decades, numerous constructions and variations of group
testing schemes have been proposed in the literature
(cf.~\cite{ref:groupTesting,ref:DH06} for a review of the major
developments). Among those, \emph{non-adaptive} testing schemes in
which the tests are designed and fixed before any measurements are
performed are of particular interest, especially for applications in
biology. Designing the tests for non-adaptive schemes is an
interesting, and challenging, combinatorial problem that has been
extensively studied in the literature. Straightforward techniques from
the probabilistic method can be used to show that random designs are
able to distinguish the set of defectives using a nearly optimal
number of tests and with overwhelming probability. It is however much
more challenging to \emph{derandomize} this task and come up with an
\emph{explicit}; i.e., deterministic, way of designing the tests that
achieve, or approach, the same qualities as offered by randomized
constructions.  Explicit constructions are also important from a
practical point of view, where a design failure can be costly and has
to be avoided.

Derandomization theory is an area at the core of theoretical computer
science that aims for a systematic study of tools and techniques that
can be used to reduce, or eliminate, the need for randomness in
computational tasks. Some major examples include simulating randomized
algorithms with deterministic ones, and derandomizing probabilistic
combinatorial constructions (cf.~\cite{lucaCombinatorial}).  In
particular, tools from derandomization theory has been recently used
for designing optimal, or nearly optimal, explicit combinatorial group
testing schemes. In this paper we give a simplified exposition of
certain such developments \cite{ref:Che09, ref:Che10}. In particular,
we study:

\begin{enumerate}
\item Highly noise-resilient group testing schemes that reliably
  approximate the set of defectives using a substantially smaller
  number of tests than what required for their exact identification.

\item Explicit group testing schemes for the \emph{threshold model},
  where a test outputs positive if the number of positives present in
  the pool exceeds a certain, arbitrary, threshold.
\end{enumerate}

The main combinatorial tools used for the above-mentioned
constructions are the notions of \emph{lossless expanders},
\emph{randomness extractors} and \emph{condensers} that are major
topics of interest in derandomization theory. For this exposition, we
will only highlight the main ideas and, for that matter, present
proofs mainly for certain restricted cases.

The paper is organized as follows. In Section~\ref{sec:testing} we
review the classical group testing model and some of its variations,
including noisy and threshold models.  In Section~\ref{sec:disjunct}
we revisit a classical known combinatorial property, called
\emph{disjunctness} and see how it is related to \emph{graph
  expansion}. We proceed to introduce constructions of noise-resilient
group testing schemes using expander and extractor graphs in
Section~\ref{sec:noise}. Finally, Section~\ref{sec:threshold}
discusses the more general threshold model and introduces a
construction of non-adaptive schemes for this model using lossless
expander graphs.

\section{Group testing and variations}
\label{sec:testing}

In classical group testing, in formal terms, we wish to identify an
unknown $d$-sparse binary vector; i.e., $x=(x_1, \ldots, x_n) \in
\zo^n$ such that
\[
|\{i\colon x_i = 1\}| \leq d,
\]
by performing a number of \emph{measurements}.  Define the
\emph{support} of $x$ (in symbols, $\supp(x)$), as the set of nonzero
entries of $x$ (known as \emph{positives}).  Each measurement is
specified by a subset of coordinate positions \[S \subseteq [n] :=
\{1, \ldots, n\}\] and outputs a binary value which is positive if and
only if $S$ contains one or more positives; i.e., if $S \cap \supp(x)
\neq \emptyset$.  The main challenge is then to design a measurement
scheme, with a reasonably small number of measurements, so that every
$d$-sparse vector can be uniquely identified from the measurement
outcomes.

In this paper, we are interested in \emph{non-adaptive}
measurements. This is when the set of the coordinate positions
defining each measurement is a priori fixed and does not depend on the
outcome of the previous measurements.  We find it convenient to think
of a non-adaptive measurement scheme as a bipartite graph $G(L,R,E)$,
called the \emph{measurement graph}.  The set of left vertices of this
graph is $L := [n]$, in one-to-one correspondence with the coordinate
positions of $x$, and right vertices (the set $R$) correspond to the
measurements. Naturally, the $i$th right vertex is connected to the
set of coordinate position specified by the $i$th measurement, and
this defines the edge set $E$.

Numerous variations and extensions of classical group testing have
been studied in the literature. In this paper, we mention the
following:

\begin{itemize}
\item \emph{Noisy group testing:} In this variation, measurement
  outcomes may be incorrect. In particular, we may allow \emph{false
    positives} (i.e., when a negative outcome is read positive),
  \emph{false negatives} (when a positive becomes negative), or
  both. The nature of bit flips might be \emph{stochastic}, i.e., an
  outcome flips with a certain probability, or \emph{adversarial},
  i.e., an adversary may arbitrarily flip the measurement outcomes
  while being limited only in the number of bit flips.

\item \emph{Threshold model:} This model was introduced by Damaschke
  \cite{ref:thresh1} as a natural extension of classical group
  testing.  The difference between classical group testing and
  threshold testing is that, in the threshold model, a measurement
  specified by a set $S$ of the coordinate positions outputs positive
  if and only if
  \[
  |S \cap \supp(x)| \geq u,
  \]
  i.e., when there are at least $u$ positives\footnote{ In the
    threshold model, we implicitly assume that the support size of the
    unknown vector $x$ is guaranteed to be at least $u$, since
    otherwise all measurement outcomes would have to be negative.  }
  in the pool $S$, for some fixed constant parameter $u$. We will use
  the shorthand $u$-threshold testing for this model. Obviously,
  classical group testing corresponds to the special case
  $u=1$. Damaschke also considers a \emph{positive-gap} threshold
  model that is characterized by lower and upper thresholds $\ell, u$
  where $\ell < u$.  In this model, a measurement outputs positive if
  there are $u$ or more positives in the pool, negative if there are
  no more than $\ell$ positives, and may behave arbitrarily
  otherwise. The gap parameter is defined as $g := u - \ell -1$. Thus,
  $u$-threshold testing is the special case when $g=0$. For the sake
  of this exposition, we only focus on this \emph{gap-free} case, but
  point out that our discussions extend to the positive-gap case in a
  straightforward manner.
\end{itemize}

For a measurement graph $G$ and sparse vector $x \in \zo^n$, we will
use the notation $G[x]$ for the binary vector of measurement outcomes
resulting from the measurements specific by the graph $G$, and more
generally, $G_u[x]$ for the vector of measurement outcomes in the
$u$-threshold model.

\section{Disjunctness and Expansion}
\label{sec:disjunct}

The graph-theoretic property required for the classical group testing
model is the following \emph{disjunctness} property. We will use the
notation $\Gamma(v)$ for the set of neighbors of a vertex $v$ in a
graph and $\Gamma(S)$ for the set of neighbors of a subset $S$ of
vertices, i.e.,
\[
\Gamma(S) := \cup_{v \in S} \Gamma(v).
\]

\begin{defn} \label{def:disjunct} A bipartite graph $G(L,R,E)$ is
  called \emph{$(d,e)$-disjunct} if, for every left vertex $i \in L$
  and every set $S \subseteq L$ such that $|S| \leq d$ and $i \notin
  S$, we have
  \[
  |\Gamma(i) \setminus \Gamma(S)| > e.
  \]
  We refer to the elements of $\Gamma(i) \setminus \Gamma(S)$ as
  \emph{distinguishing vertices}. The parameter $e$ is called the
  \emph{noise tolerance} and a $(d,0)$-disjunct matrix is simply
  called $d$-disjunct.
\end{defn}

By rephrasing the standard results in classical group testing, we see
that disjunctness is the key combinatorial property needed for group
testing. In particular, $d$-disjunct measurement graphs can uniquely
identify $d$-sparse vectors, as stated below.

\begin{lemma} \label{lem:disjunct} Let $G$ be a $(d,e)$-disjunct
  graph. Then for every distinct pairs $x, x' \in \zo^n$ of $d$-sparse
  vectors, we have $\Delta(G[x], G[x']) > e$, where $\Delta(\cdot)$
  denotes the Hamming distance between vectors.
\end{lemma}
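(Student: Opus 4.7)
The plan is to exhibit a set of more than $e$ measurement coordinates on which $G[x]$ and $G[x']$ disagree, by invoking the $(d,e)$-disjunctness property at a well-chosen pair $(i, S)$.

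First I would use the fact that $x \neq x'$ to find, without loss of generality, some coordinate $i \in \supp(x) \setminus \supp(x')$ (swapping the roles of $x$ and $x'$ if necessary). Then I would set $S := \supp(x')$, which has size at most $d$ and does not contain $i$. This is exactly the configuration to which Definition~\ref{def:disjunct} applies, yielding
\[
|\Gamma(i) \setminus \Gamma(S)| > e.
\]

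Next I would translate each distinguishing vertex $v \in \Gamma(i) \setminus \Gamma(S)$ into a disagreement between $G[x]$ and $G[x']$. Since $v$ is adjacent to $i \in \supp(x)$, the measurement corresponding to $v$ includes a positive coordinate of $x$, so its outcome in $G[x]$ is $1$. On the other hand, $v \notin \Gamma(S) = \Gamma(\supp(x'))$ means the measurement at $v$ is disjoint from $\supp(x')$, so its outcome in $G[x']$ is $0$. Therefore $G[x]$ and $G[x']$ differ at every position indexed by a distinguishing vertex, and the Hamming distance is at least $|\Gamma(i) \setminus \Gamma(S)| > e$.

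I don't foresee a real obstacle here; the statement is essentially unpacking the definition. The only subtlety is remembering that $\supp(x)$ and $\supp(x')$ are not symmetric in general, so one must pick the ``direction'' of the difference, i.e., choose $i$ in the symmetric difference and then choose $S$ to be the support of the \emph{other} vector to ensure $i \notin S$ and $|S| \le d$.
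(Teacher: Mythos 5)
Your proof is correct and follows essentially the same argument as the paper: choose $i \in \supp(x) \setminus \supp(x')$ (after a possible swap), set $S := \supp(x')$, and observe that each of the more than $e$ distinguishing vertices yields a coordinate where $G[x]$ is positive and $G[x']$ is negative. No gaps; your remark about choosing the direction of the difference is exactly the paper's ``without loss of generality'' step.
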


\begin{proof}
  Without loss of generality, take any $i \in \supp(x) \setminus
  \supp(x')$ and $S := \supp(x')$. By the disjunctness property, we
  have that $D := \Gamma(i) \setminus \Gamma(S)$ has more than $e$
  distinguishing vertices in it. Now one can immediately see that
  $G[x]$ is positive at all positions corresponding to $D$ (since they
  are connected to $i$) while $G[x']$ is negative (since they are not
  connected to any position on the support of $x'$).
\end{proof}

The noise tolerance $e$ is called so since, according to the above
lemma, a larger $e$ would make the measurements outcomes further
apart, allowing for more resilience against measurement errors in
noisy group testing. In particular, a $(d,e)$-disjunct graph can
uniquely distinguish between $d$-sparse vectors even if up to $\lfloor
e/2 \rfloor$ adversarial errors are allowed in the measurements.

The proof of Lemma~\ref{lem:disjunct} suggests the following decoding
procedure, that we will call the ``trivial decoder'':
\begin{quote} \textbf{Trivial Decoder:} Given a particular measurement
  outcome $y$, set the coordinate position of $x$ corresponding to
  each left vertex $i$ to be $1$ if and only if $\Gamma(i) \subseteq
  \supp(y)$.
\end{quote}
It is easy to see that this simple procedure uniquely reconstructs
every $d$-sparse vector $x$ provided that the graph is
$d$-disjunct. The trivial decoder can be adapted to the noisy case by
setting each coordinates position $i$ to be positive if and only if
$|\Gamma(i) \setminus \supp(y)| \leq \lfloor e/2 \rfloor$.

In this section, we see how graph expansion is related to
disjunctness. A left-regular bipartite graph graph with left-degree
$t$ (henceforth, \emph{$t$-regular graph}) $G(L,R,E)$ is called a \emph{$(k,
a)$-expander} if, for every left-subset $S \subseteq L$ of size at most
$k$, we have
\[
|\Gamma(S)| \geq a |S|.
\]
Obviously, we must have $a \leq t$ for this condition to be
satisfied. The parameter $a$ is called the \emph{expansion factor} and
graphs with expansion close to the degree are called \emph{lossless
  expanders}. In particular, for an \emph{error parameter} $\eps$, we
will call a $(k, t(1-\eps))$ expander graph a $(k,\eps)$-lossless
expander. The following counting argument shows that lossless
expanders are, in fact, disjunct graphs.

\begin{lemma} \label{lem:expander} Let $G=(L,R,E)$ be a $t$-regular
  $(d,\eps)$-lossless expander. Then, for every $\alpha \in [0,1)$,
  $G$ is $(d-1, \alpha t)$-disjunct provided that
  \[
  \eps < \frac{1-\alpha}{d}.
  \]
\end{lemma}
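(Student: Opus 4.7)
The plan is a direct counting argument that packages the lossless expansion hypothesis applied to $S \cup \{i\}$ against the trivial upper bound on $|\Gamma(S)|$ coming from $t$-regularity.

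First I would fix an arbitrary left vertex $i \in L$ and an arbitrary subset $S \subseteq L$ with $|S| \leq d-1$ and $i \notin S$, as required by Definition~\ref{def:disjunct} for the parameter $d-1$. Set $S' := S \cup \{i\}$, which has size at most $d$ and is therefore eligible for the lossless expansion guarantee.

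Next I would write the key identity $|\Gamma(i) \setminus \Gamma(S)| = |\Gamma(S')| - |\Gamma(S)|$, valid because $\Gamma(S') = \Gamma(S) \cup \Gamma(i)$. The lossless expansion of $G$ gives the lower bound $|\Gamma(S')| \geq t(1-\eps)|S'| = t(1-\eps)(|S|+1)$, while $t$-regularity gives the upper bound $|\Gamma(S)| \leq t|S|$. Subtracting yields
\[
|\Gamma(i) \setminus \Gamma(S)| \;\geq\; t(1-\eps)(|S|+1) - t|S| \;=\; t - t\eps(|S|+1).
\]
Using $|S|+1 \leq d$ and the hypothesis $\eps < (1-\alpha)/d$, the right-hand side is at least $t(1 - \eps d) > t \alpha = \alpha t$, which is exactly the disjunctness conclusion.

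I do not expect a real obstacle: the only thing to watch is that the inequality $|S|+1 \leq d$ is what forces the denominator $d$ in the bound on $\eps$, so the matching between the expansion parameter and the disjunctness parameter $d-1$ (rather than $d$) is essential. Everything else is a one-line manipulation.
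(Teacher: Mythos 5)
Your proof is correct and follows essentially the same counting argument as the paper: both apply the lossless expansion guarantee to $S \cup \{i\}$ (of size at most $d$) and play it off against the trivial bound from $t$-regularity, using $\eps d < 1-\alpha$ at the end. The only cosmetic difference is that the paper upper-bounds $|\Gamma(i) \cap \Gamma(S)|$ via the set of right vertices with more than one neighbor (at most $\eps d t$ of them), whereas you subtract $|\Gamma(S)| \leq t|S|$ from $|\Gamma(S \cup \{i\})|$ directly; the two computations are equivalent rearrangements.
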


\begin{proof}
  Take any left vertex $i \in L$ and $S \subseteq L$ such that $|S|
  \leq d-1$ and $i \notin S$. By Definition~\ref{def:disjunct}, we
  need to verify that $|\Gamma(i) \setminus \Gamma(S)| > \alpha t$.
  Let $T := \Gamma(S \cup \{i\})$. Denote by $T'$ the set of vertices
  in $T$ that have more than one neighbor in $S$. By the expansion
  assumption, we know that $|T| \geq (1-\eps)dt$, implying that $|T'|
  \leq \eps dt < (1-\alpha)t$. Now we have
  \[
  |\Gamma(i) \cap \Gamma(S)| \leq |T'| < (1-\alpha)t.
  \]
  Thus,
  \[
  |\Gamma(i) \setminus \Gamma(S)| = t - |\Gamma(i) \cap \Gamma(S)| >
  \alpha t.
  \]
\end{proof}

As trivial as it is to construct non-adaptive group testing schemes
with a large number of measurements (namely, one that measures each
individual coordinate position separately), it is trivial to construct lossless
expanders (as defined above) with a large number of right vertices. In
particular, a $t$-regular $(k,0)$-lossless expander for every $k$ can
be constructed as follows: Connect each left vertex to $t$ new
vertices on the right side, so that the right degree of the graph
becomes $1$. Indeed, such a graph is $(d,t-1)$-disjunct for every $d$
and corresponds to a trivial group testing scheme. However, in order
to get any useful results, one needs to construct highly unbalanced
lossless expanders with substantially small number of right vertices.

Using the probabilistic method, Capalbo et al.~\cite{ref:CRVW02} show
that a random construction of bipartite graphs $G(L,R,E)$ with $|L| =
n$ is, with overwhelming probability, $(k,\eps)$-lossless $t$-regular
expander where $t = O((\log n)/\eps)$ and $|R| = O(kt/\eps)$. Moreover
they show that this tradeoff is about the best one can hope for.
Thus, using Lemma~\ref{lem:expander} we see that optimal expanders are
$d$-disjunct graphs with $O(d^3 \log n)$ right vertices
(measurements).  More generally, for every $\alpha \in [0,1)$ we get
$(d,e)$-disjunct matrices, where $e = \Omega(\alpha d \log
n/(1-\alpha))$, with $O(d^3 (\log n)/(1-\alpha)^2)$ right vertices.

A direct probabilistic argument, however, shows that a randomly
constructed graph (according to a carefully chosen distribution) is,
with overwhelming probability, $d$-disjunct with $O(d^2 \log(n/d))$
measurements. More generally, for every $\alpha \in [0,1)$, random
graphs are $(d,e)$-disjunct with $e = \Omega(\alpha d \log
n/(1-\alpha)^2)$ and $O(d^2 \log(n/d)/(1-\alpha)^2)$ measurements.
This tradeoff is almost optimal since known lower bounds in group
testing \cite{ref:DVMT02,ref:SW00,ref:SW04} imply that any
$(d,e)$-disjunct graph must have $\Omega(d^2 \log_d n + ed)$ right
vertices. Therefore, an optimal lossless expander achieves a number of
measurements that is off from nearly-optimal random disjunct graphs by
a factor $\Omega(d)$.

While we discussed the number of measurements achieved by random
disjunct graphs and random expanders, for applications it is generally
favorable to have a design that avoids any randomness. In particular,
a challenging goal in group testing is to come up with \emph{explicit}
constructions of measurement graphs.  The exact meaning of
``explicit'' is up to debate. One generally recognized notion of
explicitness is the existence of a deterministic algorithm that
outputs the adjacency matrix of the measurement graph in polynomial
time with respect to its size.  A more stringent requirement would be
to have a deterministic algorithm that, given integer parameters $i,
j$, outputs the index of the $j$th neighbor of the $i$th left vertex
of the graph in polynomial time in the bit representation of $(i,j)$
(i.e., $\poly(\log n)$ where $n$ is the dimension of the sparse vector
to be measured).

The state-of-the-art explicit constructions of lossless expanders
still do not attain the optimal parameters.  For our applications,
some notable explicit lossless expanders include:

\begin{itemize}
\item Zig-Zag based $(k,\eps)$-lossless expanders due to Capalbo et
  al.~\cite{ref:CRVW02}: Achieves degree
  $t=2^{O(\log^3(\log(n)/\eps))}$ and $|R| = O(kt/\eps)$ right
  vertices.

\item Coding-based expander of Guruswami et al.~\cite{ref:GUV09}: For
  every constant parameter $\gamma > 0$, achieves degree \[
  t=O\left(((\log n)(\log k)/\eps)^{1+1/\gamma}\right) \] and right
  part size $|R| \leq t^2 k^{1+\gamma}$.
\end{itemize}

Using Lemma~\ref{lem:expander}, the two constructions result in
explicit $d$-disjunct graphs with respectively $d^2 \qpoly(d \log n)$
and $O(d^4 \log^2 n \log^2 d)$ measurements (by setting $\gamma:=1$).
Analogous expressions can be obtained for the noise-tolerant case as
well.  We remark that explicit nearly optimal disjunct graphs (with
$O(d^2 \log n)$ measurements) can be obtained from the recent
construction of Porat and Rothschild \cite{ref:PR08}. This
construction is however not explicit in the more stringent sense
discussed above.  The classical work of Kautz and Singleton
\cite{ref:KS64} (that uses Reed-Solomon codes as the main ingredient)
can be used to construct fully explicit $d$-disjunct graphs with
$O(d^2 \log^2 n)$ measurements, which is fairly sub-optimal.

\section{Graphs from Condensers and Extractors}
\label{sec:noise}
\newcommand{\bgam}{\boldsymbol{\Gamma}}

A nice way of thinking about a $(k,\eps)$-lossless expander is through
\emph{injectivity}: for every subset $S$ of left vertices, where $|S|
\leq k$, the neighborhood $\Gamma(S)$ has little \emph{collisions}.  Namely,
almost all vertices in $\Gamma(S)$ are connected to only one vertex in
$S$. Therefore, if the $j$th neighbor of a vertex $v \in S$ is
connected to $v' \in \Gamma(S)$, from $v'$ one can almost always
uniquely recover $v$ and $j$.

An injective map preserves \emph{entropy}. Thus, the above discussion
can be rephrased in information-theoretic terms.  Denote by $\bgam(S)$
the probability distribution induced on the set of right vertices by
picking a uniformly random neighbor of a uniformly random left vertex
in $S$.  For a $t$-regular graph, the entropy of the distribution
induced on the \emph{edges} of the graph by the above sampling
procedure is $\log (|S| t)$. The almost-injectivity property of the
graph intuitively implies that this entropy must be almost preserved
in $\bgam(S)$. In fact, the intuition can be made precise to show that
$\bgam(S)$ is $\eps$-close to a distribution with entropy $k$
\cite{ref:TUZ01}.  Here, the measure of distance is \emph{statistical
  distance}: Two distributions are $\eps$-close if and only if the
probability that they assign to any event is different by at most
$\eps$.  The measure of entropy is the notion of \emph{min-entropy}
which lower bounds Shannon entropy: A distribution on a finite domain
has min-entropy $\log k$ if and only if the probability that it
assigns to each element of the sample space is upper bonded by $1/k$.

The information theoretic interpretation of lossless expanders suggest
the following generalized notion: Call a $t$-regular graph $G(L,R,E)$
a $k \to_\eps k'$ condenser if, for every $S \subseteq L$ of size $k$,
the distribution $\bgam(S)$ induced on $R$ is $\eps$-close to a
distribution with entropy $\log (tk')$. Therefore, a
$(k,\eps)$-lossless expander is a $\underline{k} \to_\eps
\underline{k}$ condenser for every $\underline{k} \leq k$.

A particularly interesting special case is when $k' = |R|$.  In this
case, the output distribution $\bgam(S)$ becomes almost uniform on the
set of right vertices. A $k \to_\eps |R|$ condenser is called a
\emph{$(k, \eps)$-extractor}.

In the previous section, we saw that lossless expanders are disjunct
graphs as long as the error is sufficiently small; namely, smaller
than about $1/d$. If we allow a larger, and in particular, constant
error, we cannot hope for a disjunct graph since the number of right
vertices would be allowed to violate the known lower bounds for
disjunct graphs.  However, in this section we see that such graph are
still able to \emph{approximate} sparse vectors, even in highly noisy
settings. The key idea is captured by the following lemma.

\begin{lemma} \label{lem:extractor} Let $G(L,R,E)$ be a $t$-regular
  $(k,\eps)$-extractor.  Then, for every $d$-sparse vector $x \in
  \zo^n$ where $n := |L|$ the following holds provided that $dt <
  |R|(1-\eps)$: Given $y := G[x]$, the trivial decoder outputs $x' \in
  \zo^n$ such that $\supp(x) \subseteq \supp(x')$ and $|\supp(x')| <
  k$.
\end{lemma}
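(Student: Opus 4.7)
The plan is to prove the two conclusions separately: first that the trivial decoder never misses a true positive (so $\supp(x) \subseteq \supp(x')$), and then, by a contradiction argument invoking the extractor property, that its output cannot be too large. The first part is essentially definitional. If $i \in \supp(x)$, then every right vertex $r \in \Gamma(i)$ belongs to a pool that contains the positive coordinate $i$, so the corresponding measurement is positive and $r \in \supp(y)$. Hence $\Gamma(i) \subseteq \supp(y)$, and the trivial decoder's rule forces $x'_i = 1$. This costs nothing and does not yet use the extractor property.

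For the second part, the first step is to observe that $\supp(y)$ is itself a small set: since a measurement outcome is positive exactly when its right vertex has some neighbor in $\supp(x)$, we have $\supp(y) = \Gamma(\supp(x))$, and by $t$-regularity $|\supp(y)| \leq dt$. Now suppose for contradiction that $|\supp(x')| \geq k$, and pick any $S \subseteq \supp(x')$ with $|S| = k$. By the trivial decoder's rule, $\Gamma(i) \subseteq \supp(y)$ for every $i \in S$, so $\Gamma(S) \subseteq \supp(y)$ and therefore $|\Gamma(S)| \leq dt$.

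The final step is where the extractor hypothesis is brought in, and this is the main (but modest) conceptual point. Because $G$ is a $(k,\eps)$-extractor and $|S| = k$, the distribution $\bgam(S)$ is $\eps$-close to the uniform distribution on $R$. Now $\bgam(S)$ is entirely supported on $\Gamma(S)$, so the event ``land in $\Gamma(S)$'' has probability $1$ under $\bgam(S)$ and probability $|\Gamma(S)|/|R|$ under uniform. Statistical $\eps$-closeness therefore forces
\[
\frac{|\Gamma(S)|}{|R|} \geq 1 - \eps, \quad \text{i.e., } |\Gamma(S)| \geq (1-\eps)|R|.
\]
Combined with $|\Gamma(S)| \leq dt$, this yields $dt \geq (1-\eps)|R|$, contradicting the hypothesis $dt < (1-\eps)|R|$. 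Hence $|\supp(x')| < k$, and the lemma follows.

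The only place any real thought is required is in setting up the contradiction: recognizing that the ``certified positives'' produced by the trivial decoder form a set $S$ whose neighborhood is trapped inside $\supp(y) = \Gamma(\supp(x))$, so that a large $S$ violates the near-uniformity of $\bgam(S)$ guaranteed by the extractor property. Everything else is bookkeeping.
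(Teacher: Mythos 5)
Your proof is correct and follows essentially the same argument as the paper: the inclusion $\supp(x)\subseteq\supp(x')$ is immediate from the decoder's rule, and the size bound comes from the same contradiction, comparing the probability of a small event (the neighborhood trapped inside $\supp(y)$, of size at most $dt$) under $\bgam(S)$ versus under the uniform distribution on $R$. Your choice of a subset $S$ of size exactly $k$ is a minor (and slightly more careful) variant of the paper's taking $S=\supp(x')$ directly, but the route is the same.
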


\begin{proof}
  By the way the trivial decoder is designed, it obviously does not
  output any false negatives; i.e., we are ensured to have $\supp(x)
  \subseteq \supp(x')$.  Let $S := \supp(x')$ and suppose now, for the
  sake of contradiction, that $|S| \geq k$. Thus the extractor
  property ensures that the distribution $\bgam(S)$ is $\eps$-close to
  the uniform distribution on $R$.

  Now consider the event $T := \supp(y) \subseteq R$.  The probability
  mass assigned to this event by the distribution $\bgam(S)$ is equal
  to $1$ since the trivial decoder is defined so that for each $i \in
  \supp(x')$, we have $\Gamma(i) \subseteq T$.  On the other hand, the
  probability assigned to $T$ by the uniform distribution on $R$ is
  $|\supp(y)|/|R|$, which is at most
  \[
  \frac{t \cdot |\supp(x)|}{R} \leq \frac{td}{R}.
  \]
  Since $\bgam(S)$ is $\eps$-close to uniform, it must be that
  \[
  \frac{td}{R} \geq 1-\eps,
  \]
  contradicting the assumption.
\end{proof}

The above lemma can be extended to arbitrary $k \to_\eps k'$
condensers, in which case the required tradeoff would become $d <
k'(1-\eps)$ (through a similar line of argument).  Since, obviously,
for any condenser one must have $k \geq k'$, the bound $k-d-1$ on the
number of false positives in the approximation output by the trivial
decoder can be minimized by taking the measurement graph to be a
lossless expander (so that $k = k'$). In particular, by letting $\eps
\ll 1/(d+1)$ one can recover the statement of
Lemma~\ref{lem:expander}.  However, a constant $\eps$ (even, say,
$\eps=1/2$) may still keep the amount of false positives in the
reconstruction bounded by $O(d)$.

Same as lossless expanders, the probabilistic method can be used to
show that $(k,\eps)$-extractor graphs exist with degree $t = O(\log
(n-k)/\eps^2)$ and $|R| = \Omega(\eps^2 tk)$ right vertices. Moreover,
this is about the best tradeoff to hope for \cite{ref:lowerbounds}.

Using an optimal extractor or an optimal lossless expander in the
result discussed above gives measurement graphs with $O(d \log
n)$ right vertices for which the trivial decoder results in only
$O(d)$ false positives in the reconstruction.  At the cost of a loss
in the constant factors, the amount of false positives can be kept
bounded by $\delta d$ for any arbitrary constant $\delta > 0$ when a
lossless expander is used.

A non-adaptive scheme as above can be used in a so-called
\emph{trivial two-stage schemes} \cite{ref:Kni95} as follows: After
obtaining a set of size $O(d)$ of \emph{candidate positives}, one can
apply individual tests on the elements of this set to identify the
exact set of positives. For most practical applications, such
schemes are as good as fully non-adaptive schemes.

The result given by Lemma~\ref{lem:extractor} is extended in
\cite{ref:Che09} to not only general condensers, but also highly noisy
settings when both false positives and false negatives may occur in
the measurement outcomes. When false negatives are allowed in the
measurements, the trivial decoder should be slightly altered to
include those coordinate positions in the support of the
reconstruction that have a sufficient ``agreement'' with the
measurement outcomes. We omit the details in this exposition, but
instead state the tradeoffs obtained when the result is instantiated
with optimal extractors and lossless expanders:

\begin{itemize}
\item An optimal extractor can be set to tolerate \emph{any} constant
  fraction $p \in [0,1)$ of false positives in the measurements (i.e.,
  when up to $p$ fraction of the measurement outcomes may
  adversarially be flipped from $0$ to $1$) and an $\Omega(1/d)$
  fraction of false negatives.  The reconstruction is guaranteed to
  output a sparse vector containing the support of the original vector
  $x$ and possibly up to $O(d)$ additional \emph{false positives}.

\item An optimal lossless expander can be set to tolerate some
  constant fraction of false positives and $\Omega(1/d)$ fraction of
  false negatives in the measurement outcomes and still reconstruct
  any $d$-sparse vector up to $\delta d$ false positives, for any
  arbitrarily chosen constant $\delta > 0$.
\end{itemize}

We see that, while optimal extractors offer a better noise resilience
compared to optimal lossless expanders, the latter is more favorable
when a fine approximation of the unknown sparse vector is sought for.
The above-mentioned parameters achieved by optimal extractors and
lossless expanders are essentially optimal \cite{ref:Che09}.

Same as lossless expanders, known explicit extractors still do not
match non-constructive parameters. Notable explicit extractors for our
applications include:

\begin{itemize}
\item Coding-based extractor of Guruswami et al.~\cite{ref:GUV09}:
  Achieves degree
  \begin{eqnarray*}
    t &=& O((\log n) \cdot 2^{O(\log \kappa \cdot \log(\kappa/\eps))})\\
    &=& O((\log n) \cdot \qpoly(\log k)),
  \end{eqnarray*}
  where $\kappa := \log k$, and right size
  \[
  |R| = \Omega(\eps^2 tk).
  \]

\item Trevisan's extractor \cite{ref:Tre,ref:RRV}: Achieves
  \[t=2^{O(\log^2(\log(n)/\eps) \cdot \log \kappa)}\] and $|R| =
  \Omega(\eps^2 tk)$ right vertices.
\end{itemize}

The trade-offs obtained by various choices of the underlying extractor
and expander is summarized in Table~\ref{tab:resilient}.  While the
parameters obtained by the graphs based on Trevisan's extractor and
the lossless expander of Guruswami et al.~are superseded by other
constructions, it can be shown that \cite{ref:Che09} these graphs
allow a more efficient reconstruction algorithm than the trivial
decoder; namely, one that runs in polynomial time with respect to the
number of measurements (a quantity that can be in general
substantially lower than the running time $O(|L|\cdot|R|)$ of the
trivial decoder).

\begin{table} [!t] %[p]
  \caption[Summary of the noise-resilient group testing schemes]{A summary of constructions in Section~\ref{sec:noise}. The parameters $\alpha \in [0,1)$ and
    $\delta > 0$ are arbitrary constants, $m$ is the number of measurements,
    $e_0$ (resp., $e_1$) the number of tolerable false positives (resp., negatives) in the measurements,
    and $e'_0$ is the number of false positives in the reconstruction. 
    The underlying condenser corresponding to each row is:
    (1)~optimal extractor, (2)~optimal lossless expander, (3)~extractor of 
    Guruswami et al.~\cite{ref:GUV09}, (4)~lossless expander of Capalbo et al.~\cite{ref:CRVW02},
    (5)~Trevisan's extractor~\cite{ref:Tre,ref:RRV}, (6)~lossless expander of
    Guruswami et al.~\cite{ref:GUV09}.
  }
  \begin{center}
    \begin{tabular}{|l|c|c|c|c|}
      \hline
      & $m$ & $e_0$ & $e_1$ & $e'_0$ \\
      \hline
      1&$O(d \log n)$ & $\alpha m$ & $\Omega(m/d)$ & $O(d)$  \\
      2&$O(d \log n)$ & $\Omega(m)$ & $\Omega(m/d)$ & $\delta d$ \\
      3&$O(d^{1+o(1)} \log n)$ & $\alpha m$ & $\Omega(m/d)$ & $O(d)$ \\
      4&$d \cdot \qpoly(\log n)$ & $\Omega(m)$ & $\Omega(m/d)$ & $\delta d$ \\
      5&$d \cdot \qpoly(\log n)$ & $\alpha m$ & $\Omega(m/d)$ & $O(d)$ \\
      6&$\poly(d) \poly(\log n)$ & $\poly(d) \poly(\log n)$ & $\Omega(e_0/d)$ & $\delta d$ \\
      \hline
    \end{tabular}
  \end{center}
  \label{tab:resilient}
\end{table}

\section{Expanders and the Threshold Model}
\label{sec:threshold}

Lossless expanders are used in \cite{ref:Che10} to construct
measurement graphs suitable for the threshold model, where the
threshold $u$ can be an arbitrary constant. This result applies to the
positive-gap case as well as the gap-free case. However, in this
section we focus our attention to the gap-free model.

The combinatorial property needed for the measurement graphs suitable
for the $u$-threshold model is an extension of disjunctness, defined
below.

\begin{defn} 
  A bipartite graph $G(L,R,E)$ is called \emph{$(d,e;u)$-disjunct} (or
  \emph{threshold-disjunct}) if, for every left vertex $i \in L$
  (called the \emph{special vertex}), every set $S \subseteq L$
  containing $i$ (called the \emph{critical set}) such that $u \leq
  |S| \leq d$, and every $Z \subseteq L$ disjoint from $S$ (called the
  \emph{zero set}) such that $|Z| \leq |S|$, we have
  \begin{equation} \label{eqn:udisjunct} |\{ v\in \Gamma(i)\colon
    |\Gamma(v) \cap Z| = 0 \land |\Gamma(v) \cap S| = u \}| > e.
  \end{equation}
  In other words, more than $e$ neighbors of $i$ must have no
  neighbors in $Z$ and exactly $u$ neighbors (including $i$) in $S$.
  The parameter $e$ is called the \emph{noise tolerance}.
  \label{def:udisjunct}
\end{defn}

A simple combinatorial trick allows us to reduce the problem of
designing group testing schemes for the threshold model (i.e.,
construction of $(d,e;u)$-disjunct graphs) to the same problem in
classical group testing (i.e., $(d,e)$-disjunct graphs). This is done
through a direct product defined below.

\newcommand{\rep}{\odot}
\begin{defn}
  Let $G_1(L,R_1,E_1)$ and $G_2(L,R_2,E_2)$ be graphs with the same
  set of left vertices. Then the product $G_1 \rep G_2$ is a graph
  $G_3(L, R_3, E_3)$ with $R_3 := R_1 \times R_2$ in which a vertex
  $(i,j) \in R_3$ is connected to $v \in L$ if and only if either $i
  \in R_1$ in $G_1$ or $j \in R_2$ in $G_2$ is connected to $v$.
\end{defn}

Disjunct graphs for the $u$-threshold model can be constructed by
taking the product of ordinary disjunct graphs with graphs satisfying
a certain combinatorial property, that here we call \emph{regularity}.
The exact definition of regular graphs is very similar to that of
threshold-disjunct graphs. Formally, a \emph{$(d,e;u)$-regular} graph
is defined exactly as in Definition~\ref{def:udisjunct}, except that
the requirement \eqref{eqn:udisjunct} is modified to
\begin{equation} \label{eqn:uregular} |\{ v\in R\colon |\Gamma(v) \cap
  Z| = 0 \land |\Gamma(v) \cap S| = u \}| > e.
\end{equation}
That is, there is no ``special'' vertex $i$ this time and the only
requirement from the graph is that, for every choice of a critical set
$S$ and a zero set $Z$ as in Definition~\ref{def:udisjunct}, there
must be more than $e$ right vertices that are each connected to
exactly $u$ vertices in $S$ and none of the vertices in $Z$.

The following is proved in \cite{ref:Che10}:

\begin{lemma} \label{lem:rep} Let $G_1$ and $G_2$ be bipartite graphs
  with $n$ left vertices, such that $G_1$ is
  $(d-1,e_1;u-1)$-regular. Let $G := G_1 \rep G_2$, and suppose that
  for $d$-sparse boolean vectors $x, x' \in \zo^n$ such that $\wgt(x)
  \geq \wgt(x')$, we have
  \[
  | \supp(G_2[x]_1) \setminus \supp(G_2[x']_1)| \geq e_2.
  \]
  Then, $ |\supp(G[x]_u) \setminus \supp(G[x']_u)| \geq (e_1+1) e_2.
  $ \qed
\end{lemma}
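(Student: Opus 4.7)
The plan is to produce $(e_1+1)e_2$ distinct elements of $\supp(G[x]_u) \setminus \supp(G[x']_u) \subseteq R_1 \times R_2$ by pairing each of the $e_2$ right-vertices witnessing the $G_2$-difference with at least $e_1+1$ witnesses supplied by the regularity of $G_1$. Write $X := \supp(x)$ and $X' := \supp(x')$, and fix $J \subseteq \supp(G_2[x]_1) \setminus \supp(G_2[x']_1)$ with $|J| = e_2$. For each $j \in J$, $\Gamma_{G_2}(j)$ meets $X$ but is disjoint from $X'$, so I choose $v_j \in \Gamma_{G_2}(j) \cap (X \setminus X')$.

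For each $j \in J$, abbreviate $v := v_j$ and apply $(d-1,e_1;u-1)$-regularity of $G_1$ to the critical set $S := X \setminus \{v\}$ and zero set $Z := X' \setminus X$, trimming a single element from $Z$ if $|Z| > |S|$. The size constraints all hold: the threshold convention $|X| \geq u$ together with $d$-sparsity gives $u-1 \leq |S| = |X|-1 \leq d-1$; and $|X' \setminus X| \leq |X|-1$ holds directly unless $|X| = |X'|$ and $X \cap X' = \emptyset$, in which boundary case the one-element trim reduces $|Z|$ to $|X|-1 = |S|$. Regularity then supplies a set $I_j \subseteq R_1$ with $|I_j| > e_1$, each $i \in I_j$ satisfying $|\Gamma_{G_1}(i) \cap S| = u-1$ and $\Gamma_{G_1}(i) \cap Z = \emptyset$.

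It remains to verify that every pair $(i,j)$ with $j \in J$ and $i \in I_j$ belongs to $\supp(G[x]_u) \setminus \supp(G[x']_u)$, using that $\Gamma_G((i,j)) = \Gamma_{G_1}(i) \cup \Gamma_{G_2}(j)$ by the definition of $\rep$. For positivity under $x$, $\Gamma_{G_1}(i) \cap S$ contributes $u-1$ elements of $X$ and $v \in \Gamma_{G_2}(j) \cap X$ contributes one more element outside $S$, so $|\Gamma_G((i,j)) \cap X| \geq u$. For the negative side, $\Gamma_{G_2}(j) \cap X' = \emptyset$ reduces the task to $|\Gamma_{G_1}(i) \cap X'| \leq u-1$: splitting $X' = (X \cap X') \sqcup (X' \setminus X)$, the first part lies inside $S$ (because $v \notin X'$) and is therefore hit at most $u-1$ times, while the second part is contained in $Z$ up to the at most one trimmed element; that trimmed element only appears in the boundary case, where the first part contributes zero, so the total is $\leq u-1$. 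Summing over $j \in J$ produces $(e_1+1)e_2$ distinct pairs, as required. The main obstacle is the tight size constraint $|Z| \leq |S|$ in the regularity hypothesis, which forces the single-element trim and makes the negative-side bound in the boundary case the delicate point of the argument.
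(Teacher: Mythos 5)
Your proof is correct and follows essentially the same pairing/counting argument as the proof the paper defers to \cite{ref:Che10}: for each of the $e_2$ distinguishing right vertices $j$ of $G_2$ you pick $v_j \in \Gamma_{G_2}(j)\cap(\supp(x)\setminus\supp(x'))$ and invoke $(d-1,e_1;u-1)$-regularity of $G_1$ with critical set $\supp(x)\setminus\{v_j\}$ and zero set $\supp(x')\setminus\supp(x)$, then use $\Gamma_G((i,j))=\Gamma_{G_1}(i)\cup\Gamma_{G_2}(j)$ to get $(e_1+1)e_2$ distinct distinguishing pairs. Your trimming patch for the corner case $|\supp(x)|=|\supp(x')|$ with disjoint supports is valid in the regime $u\ge 2$ where the lemma is actually used (for $u=1$ that bound of one stray element would not suffice, but there one can simply split $\supp(x')$ between the critical and zero sets, and that degenerate case is outside the construction's scope anyway).
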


Thus, if a measurement graph is able to distinguish between $d$-sparse
vectors in the classical model of group testing (i.e., with threshold
$1$), then its product with a $(d-1,e;u-1)$-regular matrix
distinguishes between $d$-sparse vectors in the $u$-threshold
model. In fact it turns out that if the original graph is disjunct in
the classical sense, the product becomes threshold-disjunct for
threshold $u$. Thus in order to design measurement schemes for the
threshold model, it suffices to focus on construction of regular
graphs.

A construction of regular graphs based on lossless expanders is given
in \cite{ref:Che10}. The lossless expanders required by this
construction have to satisfy a certain property, and we use the term
\emph{function graph} to refer to such graphs. A function graph
$G(L,R,E)$ is a $t$-regular bipartite graph where the set $R$ of the
right vertices is partitioned into $t$ equal-sized groups. The
requirement is that the $t$ neighbors of each left vertex must each
belong to a distinct group. All the above-mentioned explicit, and
probabilistic, constructions of lossless expanders are in fact
function graphs.

The construction can be conveniently explained using the following
graph composition: Let $G_1(L,[t] \times R_1,E_1)$ be a $t$-regular
function graph where the right nodes are partitioned into $t$ groups
of size $|R_1|$ each, and $G_2(R_1, R_2, E_2)$ be a bipartite graph.
Then the composition $G_1 \leadsto G_2$ is a bipartite graph $G_3(L,
[t] \times R_2, E_3)$ such that, for each $i \in L, j \in [t], k \in
R_2$, an edge $(i,(j,k))$ is in $E_3$ if and only if there is a $v \in
R_1$ such that $(i,(j,v)) \in R_1$ and $(v,k) \in R_2$.  Intuitively,
the composition can be seen as follows: Each of the $t$ groups of the
right vertices in $G_1$ is replaced by a copy of $G_2$, so that a
two-layered graph is obtained. Then the two layers are collapsed into
one by short-cutting all paths of length two from left to right.

\begin{figure} %(intermediate construction) %\hfill
  \caption{Construction of regular matrices.}

  \begin{framed}
    \begin{itemize}
    \item {\it Given: } A $t$-regular $(k, \eps)$-lossless expander
      $G(L,R,E)$ where $k, |L|, |R|$ are powers of two, integer
      parameter $u \geq 1$ and real parameter $p \in [0,1)$ such that
      $\eps < (1-p)/16$,

    \item {\it Output: } A measurement graph with $n := |L|$ left
      vertices and $ m := O_u(tk (|R|/k)^u) $ right vertices.

    \item {\it Construction: }
      % Let $r := \lceil \log(d/u) \rceil$. We construct $r$ matrices
      % $\cM_1, \ldots, \cM_r$ that are stacked on top of one another
      % to
      % form the resulting matrix $\cM$. For each $i \in [r]$, we
      % construct $\cM_i$ as follows:
% 
      Let $G_1=(R, R_1, E_1)$ be any bipartite bi-regular graph with
      $|R_1|=k$, left degree $d_\ell := 8u$, and right degree $d_r :=
      8u(|R|/k)$.  Replace each right vertex $v$ of $G_1$ with
      $\binom{d_r}{u}$ vertices, one for each subset of size $u$ of
      the vertices on the neighborhood of $v$, and connect them to the
      corresponding subsets. Denote the resulting graph by $G_2 = (R,
      R_2, E_2)$, where $|R_2| = k \binom{d_r}{u}$. Output $G \leadsto
      G_2$.
    \end{itemize}
  \end{framed}
  \label{fig:regular}
\end{figure}

Using the above notation, a construction of regular matrices is
described in Fig.~\ref{fig:regular}.  Analysis of the construction
leads to the following result that is proved in \cite{ref:Che10}:

\begin{theorem} \label{thm:regular} The graph output by the
  construction described in Fig.~\ref{fig:regular} is $(k/2, pt;
  u)$-regular as long as, in the definition of regularity, the
  critical set $S$ is restricted to have size at least $k/4$. \qed
\end{theorem}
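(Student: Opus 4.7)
The plan is to fix a critical set $S \subseteq L$ with $k/4 \le |S| \le k/2$ and a disjoint zero set $Z \subseteq L$ with $|Z| \le |S|$, and to exhibit more than $pt$ right vertices of $H := G \leadsto G_2$ meeting the regularity condition \eqref{eqn:uregular}. First I introduce the set
\[R^* := \{w \in R : |\Gamma_G(w) \cap S| = 1 \text{ and } \Gamma_G(w) \cap Z = \emptyset\}\]
of intermediate vertices that are uniquely captured by $S$ and avoid $Z$. Since $|S \cup Z| \le 2|S| \le k$, the $(k,\eps)$-lossless property applies to $S \cup Z$, and an edge-counting argument in the spirit of Lemma~\ref{lem:expander} shows that $W := \{v \in R : |\Gamma_G(v) \cap (S \cup Z)| \ge 2\}$ has size at most $\eps t(|S|+|Z|)$, and hence receives at most $2\eps t(|S|+|Z|) \le 4\eps t|S|$ edges from $S \cup Z$. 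Since every edge leaving $S$ that misses $R^*$ must land in $W$, this gives $|R^*| \ge (1-4\eps)t|S|$.

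Next I unwind the composition. A right vertex of $H$ has the form $(j,(v,T))$ with $j \in [t]$, $v \in R_1$, and $T \in \binom{\Gamma_{G_1}(v)}{u}$; by the function-graph property of $G$ its left neighborhood in $H$ equals the disjoint union $\bigsqcup_{w \in T \cap \mathrm{group}_j}\Gamma_G(w)$ (writing $\mathrm{group}_j$ for the $j$th group of $R$). Hence
\[|\Gamma_H((j,(v,T))) \cap S| = \sum_{w \in T \cap \mathrm{group}_j}|\Gamma_G(w) \cap S|,\]
and similarly for $Z$. Consequently, $T \subseteq R^* \cap \mathrm{group}_j$ is a sufficient condition for $(j,(v,T))$ to be \emph{good}, i.e., to have exactly $u$ neighbors in $S$ and none in $Z$.

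Finally, set $Y_{v,j} := \Gamma_{G_1}(v) \cap R^* \cap \mathrm{group}_j$. Since any non-empty $T \subseteq \mathrm{group}_j$ determines $j$ uniquely, the number of good right vertices of $H$ is at least $\sum_{v,j}\binom{|Y_{v,j}|}{u}$. The left-regularity of $G_1$ (degree $d_\ell = 8u$) gives $\sum_{v,j}|Y_{v,j}| = 8u|R^*|$, so the average value of $|Y_{v,j}|$ over the $tk$ pairs is at least $2u(1-4\eps) > 3u/2$, using $|S| \ge k/4$ together with $\eps < (1-p)/16$. As $\binom{\cdot}{u}$ is convex on the non-negative integers, Jensen's inequality yields
\[\sum_{v,j}\binom{|Y_{v,j}|}{u} \ge tk\binom{3u/2}{u} \ge \tfrac{3}{2}\,tk > pt,\]
using $k \ge 1$ and $p < 1$.

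I expect the first stage to be the main technical obstacle: to get the linear bound $|R^*| \ge (1-O(\eps))t|S|$, one must apply the lossless-expansion hypothesis to $S \cup Z$ (not to $S$ and $Z$ individually) and jointly bound the $S$-collision and $Z$-overlap contributions so that only a small constant multiple of $\eps$ is lost. Once that estimate is in hand, the remainder is bookkeeping on $G_1$ and a direct application of Jensen's inequality.
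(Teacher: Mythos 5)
Your argument is correct, and it does prove the stated theorem; note that the paper itself gives no proof of Theorem~\ref{thm:regular} (it is imported from \cite{ref:Che10}), so the comparison is with that source. Your route is the natural ``global'' version of the argument there: the proof in \cite{ref:Che10} works with strong lossless condensers, so it must first argue seed-by-seed (i.e., group-by-group) that for most choices of $j$ the per-seed output of $S\cup Z$ has few collisions, and only then count $u$-subsets in the bi-regular mixing graph; because you start from a lossless \emph{expander}, whose guarantee holds for all groups simultaneously, you can skip the good-seed selection and count the unique-neighbor set $R^*$ once, which is cleaner. Two small points to tighten: (i) the bound on the number of edges entering $W$ does not follow from $|W|\le \eps t|S\cup Z|$ alone (right degrees can be large); it follows from the excess-edge count, namely edges into $W$ equal $t|S\cup Z| - (|\Gamma(S\cup Z)|-|W|) \le 2\eps t|S\cup Z|$, which is evidently what you intend; (ii) Jensen should formally be applied to the piecewise-linear convex extension of $x\mapsto\binom{x}{u}$, since the polynomial extension is not convex near $0$. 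Also note that the construction's $G_2$ has left set all of $R$ rather than a single group, so $G\leadsto G_2$ needs exactly the reading you adopt ($i\sim(j,(v,T))$ iff $i$'s group-$j$ neighbor lies in $T$); this is the interpretation consistent with the stated $m=O_u(tk(|R|/k)^u)$. Finally, it is worth observing that your count gives $\Omega_u(tk)$ good right vertices, far more than the stated threshold $pt$; this stronger tolerance (of order $p\,tk$ in \cite{ref:Che10}) is in fact what the paper's subsequent parameter claims (e.g., $e=\Omega(d\log n)$ for the union, and Table~\ref{tab:threshold}) implicitly rely on, so your argument recovers the version of the statement that is actually needed downstream.
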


In order to obtain $(d,pt;u)$-regular graphs, by
Theorem~\ref{thm:regular} it suffices to apply the construction of
Fig.~\ref{fig:regular} for $O(\log d)$ different values of $k$,
namely\footnote{ The case where the sparsity (the size of the critical
  set) lies between $u$ and $k_0-1$ is of minor importance and can be
  handled using straightforward tricks.  },
\[ k = 2^{\lceil \log d \rceil+1}, 2^{\lceil \log d \rceil}, 2^{\lceil
  \log d \rceil -1}, \ldots, 2^{\lceil \log u \rceil+2} =: k_0.\] By
doing so, we obtain $O(\log d)$ graphs. We take the \emph{union} of
all the obtained graphs (where for two graphs $G_1(L,R_1,E_1)$ and
$G_2(L,R_2,E_2)$ with the same set of left vertices, the union is a
graph $G_3(L,R_1 \cup R_2,E_3)$ where $e \in E_3$ if and only if $e
\in E_1$ or $e \in E_2$).  The resulting graph must be
$(d,pt;u)$-regular, since for every possible size of the critical set,
Theorem~\ref{thm:regular} applies for at least one of the components
of the union.

\begin{table*}
  \caption{Summary of the parameters achieved by the threshold testing scheme described in Section~\ref{sec:threshold},
    for various choices of the lossless expander. The 
    noise parameter $p \in [0,1)$ is arbitrary. The underlying expander corresponding to each row is:
    (1)~an optimal lossless expander, (2)~construction of Capalbo et al.~\cite{ref:CRVW02},
    (6)~construction of Guruswami et al.~\cite{ref:GUV09}.
  }

  \begin{center}
    \begin{tabular}{|c|l|l|p{10cm}|}
      \hline
      & Number of rows & Tolerable errors & Remarks \\ \hline
      1 & $O(d^{3} \frac{(\log d) \log^2 n}{(1-p)^{2}})$ & $\Omega(p d^2 \frac{\log^2 n}{(1-p)^2})$& \\
      2 & $O(d^{3} \frac{(\log d) T_2 \log n}{(1-p)^{g+2}})$ & $\Omega(p d^2 \frac{T_2 \log n}{1-p})$ & 
      $T_2 = \exp(O(\log^3 \log n )) = \qpoly(\log n)$. \\
      3 & 
      % $\begin{array}{l} O(d^{g+3+\beta} \cdot \\ \quad\qquad \cdot
      %   \frac{T_3^{\ell} \log n}{(1-p)^{g+2}}) \end{array}$
      $O(d^{3+\beta} \frac{T_3^{\ell} \log n}{(1-p)^{g+2}})$
      % & $\begin{array}{l} \Omega(p d^{2-\beta} \frac{ \log n}{1-p}
      %   ) \end{array}$ &
      & $\Omega(p d^{2-\beta} \frac{ \log n}{1-p} )$ & $\beta>0$ is
      any arbitrary constant and
      $T_3 = ((\log n)(\log d))^{1+u/\beta} = \poly(\log n, \log d)$.  \\
      \hline
    \end{tabular}
  \end{center}
  \label{tab:threshold}
\end{table*}

Using optimal lossless expanders, the construction leads to $(d, e;
u)$-regular graphs with $O(d(\log d)(\log n))$ right vertices and $e =
\Omega(d \log n)$.  By taking the direct product of such graphs with
optimal explicit $(d, \Omega(d \log n))$-disjunct graphs of Porat and
Rothschild \cite{ref:PR08} that have $\Omega(d^2 \log n)$ right
vertices, we would get $(d,\Omega(d^2 \log^2 n);u)$-disjunct graphs
with $O(d^3 (\log d) (\log^2 n))$ right vertices. The amount of
measurements achieved by different choices of the underlying condenser
is summarized in Table~\ref{tab:threshold}.

A direct probabilistic argument can be used to show that, random
regular graphs (sampled according a suitably chosen distribution) are,
with overwhelming probability, $(d,e;u)$-disjunct with $m=O(d^2 (\log
d) \log(n/d) / (1-p)^2)$ and $e=\Omega(p d
\log(n/d)/(1-p)^2)$. Moreover, since threshold disjunct graphs are
stronger objects than classical disjunct graphs, the above-mentioned
lower bound of $m=\Omega(d^2 \log_d n+e d)$ applies to them as well,
implying that the probabilistic construction is nearly optimal.  Thus,
the construction described in this section is off by a factor
$\Omega(d)$ in the number of measurements even when an optimal
lossless expander is available. Closing this gap is an interesting
question for future research.

%%%%%%%%%%%%%%%%%%%%%%%%%%%%%%%%%%%%%%%%%%%%%%%%%%%%%%%%%%%%
\addtolength{\textheight}{-11cm} % This command serves to balance the column lengths
% \addtolength{\textheight}{-5.0cm} % This command serves to balance the column lengths
% on the last page of the document manually. It shortens
% the textheight of the last page by a suitable amount.
% This command does not take effect until the next page
% so it should come on the page before the last. Make
% sure that you do not shorten the textheight too much.
%%%%%%%%%%%%%%%%%%%%%%%%%%%%%%%%%%%%%%%%%%%%%%%%%%%%%%%%%%%%

% \bibliographystyle{plain} \bibliography{bibliography}
\bibliographystyle{IEEEtran} \bibliography{bibliography}

\end{document}